\documentclass[12pt]{llncs}

\textwidth=13.5cm
\oddsidemargin11mm
\evensidemargin11mm
\textheight=20.5cm

\usepackage[utf8]{inputenc} 
\usepackage{amsmath}
\usepackage{amsfonts}
\usepackage{amssymb}
\usepackage{hyperref}

\parskip=1ex
\usepackage{tikz}
\usetikzlibrary{arrows.meta}

\def\TWOOPT{2-Opt~Heuristic}
\def\2opt{2-Opt heuristic}
\def\mid{:}
\DeclareMathOperator{\area}{area}

\title{\hspace*{-7mm}\mbox{The Approximation Ratio of the \TWOOPT} for the Metric Traveling Salesman Problem}
\author{Stefan Hougardy~~ ~~Fabian Zaiser~~ ~~Xianghui Zhong}
\institute{Research Institute for Discrete Mathematics,
           University of Bonn\\
           Lenn\'estr.~2, 53113 Bonn, Germany\\[5mm]
           }
\begin{document}
\maketitle 

\centerline{\today}

\begin{abstract}
The \2opt is one of the simplest algorithms 
for finding good solutions to the metric Traveling Salesman Problem.
It is the key ingredient to the well-known Lin-Kernighan algorithm and often used in practice.
So far, only upper and lower bounds on the approximation ratio of the \2opt for the metric TSP were known. 
We prove that for the metric TSP with $n$ cities, the approximation ratio of the \2opt is $\sqrt{n/2}$ and that this bound is tight.
\end{abstract}

{\small\textbf{keywords:} traveling salesman problem; metric TSP; 2-Opt; approximation algorithm}

\section{Introduction}

In the Traveling Salesman Problem (TSP), we are given $n$ cities with their pairwise 
distances. The task is to find a shortest tour that visits each city exactly once. 
The Traveling Salesman Problem is one of the most intensely studied problems in combinatorial optimization.
It is well known to be NP-hard~\cite{GJ1979}. Without any additional assumptions, the Traveling Salesman Problem
is also hard to approximate to any number that is polynomial in $n$~\cite{SG1976}. The \emph{metric TSP}
is a special case of the TSP where the distance function satisfies the triangle inequality.
The metric TSP is also NP-hard~\cite{Kar1972}. Therefore, a lot of time has been spent to find polynomial time 
algorithms with a small approximation ratio for the metric TSP.  
In 1976, Christofides~\cite{Chr1976} proposed an algorithm 
for the metric TSP with an approximation ratio of $3/2$. To date, no polynomial time algorithm with smaller approximation ratio is known. 

For real-world instances appearing in practice, it turns out that many simple algorithms often find better solutions than 
Christofides' algorithm (see e.g.~\cite{Joh1990,Ben1992,Rei1994}). 
One of these algorithms is the \emph{\2opt}, which is the key ingredient to the well-known Lin-Kernighan algorithm~\cite{LK1973}.
Starting with an arbitrary tour, the \2opt repeatedly replaces two edges of the tour  
by two other edges, as long as this yields a shorter tour.
The \2opt stops when no further improvement can be made this way. 
A tour that the \2opt cannot improve is called \emph{2-optimal}.

Experiments on real-world instances have shown that the \2opt applied to a greedy tour achieves much better results
than Christofides' algorithm (see e.g. Bentley~\cite{Ben1992}). The exact approximation ratio
of the \2opt for metric TSP was not known so far. 
In 1987, Plesn\'\i k~\cite{Ple1987a} proved a lower bound of $\sqrt{n/8}$. 
In 1999, Chandra, Karloff, and Tovey~\cite{CKT1999} presented a proof showing an upper bound of $4\sqrt n$. 
In 2013, Levin and Yovel~\cite{LY2013} observed that this proof yields the value $2 \sqrt{2n}$. 
This leaves a gap of factor $8$ between the upper bound  $2 \sqrt{2n}$ and the lower bound $\sqrt{n/8}$.
Our main result determines the exact approximation ratio of the \2opt:

\begin{theorem}
The length of a 2-optimal tour in a metric TSP instance with $n$ cities is at most
$\sqrt{n/2}$ times the length of a shortest tour and this bound is tight.
\label{thm:main}
\end{theorem}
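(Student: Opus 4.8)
The plan is to prove the two directions separately. For the upper bound, let $T$ be a 2-optimal tour, let $T^*$ be a shortest tour, and write $L=d(T^*)$; fixing a cyclic orientation of $T^*$, let $d^*(x,y)$ be the length of a shorter arc of $T^*$ between cities $x$ and $y$, so the triangle inequality gives $d(x,y)\le d^*(x,y)\le L/2$. The heart of the proof is the inequality
\[
\sum_{e\in T} d(e)^2 \;\le\; \tfrac12\,d(T^*)^2 .
\]
Granting it, the Cauchy--Schwarz inequality yields
\[
d(T)^2=\Bigl(\sum_{e\in T}d(e)\Bigr)^2\le n\sum_{e\in T}d(e)^2\le \tfrac{n}{2}\,d(T^*)^2,
\]
which is exactly the asserted bound. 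Note that for the final bound to be attained, the Cauchy--Schwarz step forces all $n$ edges of $T$ to have the common length $L/\sqrt{2n}$ (with the key inequality tight as well); this is the structure an extremal instance must have, and it already explains why for small $n$ the bound is far from tight.

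To prove the key inequality I would charge the squared edge lengths of $T$ against edges of $T^*$. For an edge $e=\{x,y\}$ of $T$, the triangle inequality along $T^*$ gives $d(e)\le\sum_{f\in A}d(f)$ for either of the two arcs $A$ of $T^*$ joining $x$ and $y$; choosing one such arc $A_e$ for each $e$ (say a shorter one, with ties broken by a fixed global rule), we obtain
\[
d(e)^2\le\Bigl(\sum_{f\in A_e}d(f)\Bigr)^2=\sum_{f\in A_e}\sum_{f'\in A_e}d(f)\,d(f').
\]
Summing over $e$ and regrouping, the key inequality becomes equivalent to a packing statement: for every edge $f$ of $T^*$, the arcs $A_e$ passing through $f$ have total length at most $\tfrac12 d(T^*)$ (and at an extremal instance this must hold with equality for every $f$). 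This is the step where 2-optimality must be used decisively. If two edges of $T$ straddle the same edge $f$ of $T^*$, the corresponding 2-opt exchange together with the triangle inequality limits how much their two arcs can overlap; chaining such constraints over all edges of $T$ that cross $f$ should force the arcs through $f$ to fit into the two $T^*$-arcs leaving $f$ with the required factor $\tfrac12$. I expect this packing argument to be the main obstacle of the whole proof: one has to choose the arcs $A_e$ coherently, handle ties and boundary cases, and turn a family of pairwise 2-opt inequalities into a single global estimate --- it is exactly this quantitative core that the earlier bound of Chandra, Karloff, and Tovey (and its refinement by Levin and Yovel) did not reach.

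For tightness, the plan is to construct, for infinitely many $n$, a metric instance on $n$ cities together with a 2-optimal tour $T$ for which $d(T)/d(T^*)\to\sqrt{n/2}$ as $n\to\infty$. The equality analysis dictates the design: the instance must admit a 2-optimal tour whose edges are all (nearly) equal, of length $\approx L/\sqrt{2n}$, while a genuinely short tour of length $L$ also exists. A natural template is to place the cities near a short ``backbone'' that realizes $T^*$, grouped and perturbed so that $T$ is forced to make $n$ medium hops of nearly equal length and yet no 2-opt exchange shortens it; improving the factor $\sqrt{n/8}$ of Plesn\'\i k's construction to $\sqrt{n/2}$ presumably calls for a more efficient arrangement of this kind. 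The verification then falls into three parts: (a) defining the metric and checking the triangle inequality (with a little slack); (b) computing $d(T^*)$ and $d(T)$ and taking the limit; and (c) a case analysis showing that $T$ admits no improving 2-opt move. Step (c) is the delicate one, since the construction must be rigid enough that every exchange is non-improving yet extreme enough that the ratio actually approaches $\sqrt{n/2}$ --- precisely the balance that the tight construction has to strike and earlier ones did not.
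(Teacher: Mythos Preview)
Your overall plan for the upper bound is right and matches the paper: the crux is the inequality $\sum_{e\in T}c(e)^2\le \tfrac12\,c(T^*)^2$, after which Cauchy--Schwarz (equivalently the AM--QM inequality) finishes. The gap is in how you propose to prove that inequality. Bounding $c(e)$ by the arc length $|A_e|=d^*(x,y)$ and then aiming for the packing statement ``for every $f\in T^*$, $\sum_{e:\,f\in A_e}|A_e|\le L/2$'' cannot work: that packing would imply $\sum_e|A_e|^2\le L^2/2$, and this intermediate inequality is already \emph{false} on the paper's extremal instance. There $n=2k^2$, $L=2k$, every edge of the 2-optimal tour $T'$ has $c(e)=1$ so $\sum_e c(e)^2=2k^2=L^2/2$ with equality, but many edges have $|A_e|>1$ (e.g.\ $(v_{1,2},w_{2,1})$ has endpoints at positions $0$ and $3$ on the circle, hence $|A_e|=3$ for $k\ge3$), so $\sum_e|A_e|^2>L^2/2$. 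Thus the step $c(e)\le |A_e|$ is too lossy for the tight constant, and no choice of shorter arcs rescues it.

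The paper avoids this loss by packing with $c(e)$ itself rather than with arc lengths, at the price of going to two dimensions. Orient $T$, embed $T^*$ on a circle of circumference $L$, and for each directed edge $(u,v)\in T$ place the $\ell^1$-ball of radius $c(u,v)$ centred at $(\mathrm{pos}(u),\mathrm{pos}(v))$ on the flat torus $[0,L)^2$; its area is $2c(u,v)^2$. If two such balls met, then $d^*(u_1,u_2)+d^*(v_1,v_2)<c(u_1,v_1)+c(u_2,v_2)$, which together with the triangle inequality $c(u_1,u_2)+c(v_1,v_2)\le d^*(u_1,u_2)+d^*(v_1,v_2)$ contradicts 2-optimality for the \emph{oriented} pair $(u_1,v_1),(u_2,v_2)$. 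Disjointness in the torus of area $L^2$ then gives $\sum_e 2c(e)^2\le L^2$. Note how the orientation of $T$ is used essentially to know which 2-change is the relevant one; your arc scheme never invokes it.

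For tightness you only sketch a template. The paper gives an explicit construction that hits the bound \emph{exactly} (not merely asymptotically) for every $n=2k^2$: a $\{0,1,2\}$-valued metric on $2k$ ``sections'' of $k$ vertices each, an optimal tour of length $2k$, and a 2-optimal tour of length $2k^2$. The 2-optimality check is a short case analysis showing that any improving 2-change would force two parallel edges between the same pair of sections, which the tour does not contain.
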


As the \2opt always returns a 2-optimal tour and the \2opt may start with any tour, we immediately get:

\begin{corollary}
The \2opt for metric TSP instances with $n$ cities has approximation ratio $\sqrt{n/2}$ and this result is tight.
\end{corollary}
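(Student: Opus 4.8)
The plan is to derive the corollary directly from Theorem~\ref{thm:main}, reproving nothing about 2-optimal tours. Its two assertions---the value $\sqrt{n/2}$ of the approximation ratio and its tightness---match exactly the two halves of Theorem~\ref{thm:main} (the bound itself and its tightness), once we use the two facts noted just above the corollary: the \2opt always halts at a 2-optimal tour, and it may be started from any initial tour.

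For the upper bound, I would fix an arbitrary metric instance on $n$ cities and an arbitrary run of the \2opt. By definition the algorithm stops only when no improving 2-change remains, so its output $T$ is 2-optimal; Theorem~\ref{thm:main} then bounds the length of $T$ by $\sqrt{n/2}$ times the length $\mathrm{OPT}$ of a shortest tour. As this holds for every instance and every run, the worst-case ratio of the \2opt is at most $\sqrt{n/2}$.

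For tightness, I would invoke the tightness clause of Theorem~\ref{thm:main}: for every $\varepsilon>0$ there is, for suitable $n$, a metric instance on $n$ cities carrying a 2-optimal tour $T$ of length at least $(1-\varepsilon)\sqrt{n/2}\cdot\mathrm{OPT}$. The crucial point is that the \2opt may start from any tour, so I start it from this very $T$. Since $T$ is already 2-optimal, no improving 2-change exists; the algorithm performs no exchange and returns $T$ unchanged. Hence on this instance it outputs a tour of ratio at least $(1-\varepsilon)\sqrt{n/2}$, and letting $\varepsilon\to0$ shows the ratio cannot be smaller than $\sqrt{n/2}$.

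The only genuine subtlety, which I would state explicitly, is the meaning of ``approximation ratio'' for the \2opt: because its output depends on both the starting tour and the choice among improving moves, this ratio is the supremum over all instances and all admissible runs. The two steps above pin this supremum exactly---at most $\sqrt{n/2}$ because every output is 2-optimal, at least $\sqrt{n/2}$ because the run started at the worst 2-optimal tour attains it. I expect no real obstacle: the substantive work lies entirely in Theorem~\ref{thm:main}, and the corollary is the bookkeeping step that reads off the algorithm's ratio from the two conclusions of that theorem.
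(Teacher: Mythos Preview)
Your proposal is correct and matches the paper's approach exactly: the paper derives the corollary in one sentence from Theorem~\ref{thm:main} using the same two facts you invoke (the \2opt outputs a 2-optimal tour, and it may start from any tour). One minor simplification: the $\varepsilon$ is unnecessary, since the tightness clause of Theorem~\ref{thm:main} (proved in Section~\ref{sec:lowerBound}) gives, for each $n=2k^2$, an instance with a 2-optimal tour of length \emph{exactly} $\sqrt{n/2}\cdot\mathrm{OPT}$.
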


To prove Theorem~\ref{thm:main}, we show in Section~\ref{sec:upperBound} that the length of a 2-optimal tour in a metric TSP instance 
is bounded by  $\sqrt{n/2}$ times the length of a shortest tour. In Section~\ref{sec:lowerBound}, we provide an infinite family of
metric TSP instances and 2-optimal tours within these instances with length $\sqrt{n/2}$ times the length of a shortest tour.
This proves the tightness stated in Theorem~\ref{thm:main}. Before proving the upper and the lower bound, we present in Section~\ref{sec:notation}
some notation and background on the \2opt.

\section{Metric TSP and the \TWOOPT}
\label{sec:notation}

Let $G=(V(G),E(G))$ be a complete undirected graph with $|V(G)|=n$. 
The set $E(G)$ contains all $\binom{n}{2}$ possible edges between the $n$ vertices.
The \emph{distances} between the vertices are defined by a function  $c:E(G) \to \mathbb{R}_{\ge 0}$.
A \emph{tour} in $G$ is a cycle that contains all the vertices of $G$.
The \emph{length} of a tour $T$ in $G$ is defined as $c(T) := \sum_{e\in E(T)} c(e)$.  
A \emph{shortest tour} is a tour of minimum length among the tours in $G$. 
Given a graph $G=(V(G),E(G))$ and  a function $c:E(G) \to \mathbb{R}_{\ge 0}$, the Traveling Salesman Problem
is to find a shortest tour in $G$. To simplify the notation, we will denote the length of an edge $\{x,y\} \in E(G)$ simply by $c(x,y)$ instead 
of the more cumbersome notation $c(\{x,y\})$. 
In the metric TSP, the distance function $c$ satisfies the triangle inequality, i.e.\ we have 
for any set of three vertices $x, y, z\in V(G)$:
\begin{equation}
c(x,y) ~+~ c(y, z) ~ ~\ge~ ~ c(x, z).
\end{equation}

An algorithm for the traveling salesman problem has \emph{approximation ratio}
$\alpha(n)\ge 1	$ if for every TSP instance with $n$ vertices, it finds a tour that is at most 
$\alpha(n)$ times as long as a shortest tour.

The \2opt repeatedly replaces two edges from the tour by two other edges such that the resulting tour is shorter.
Given a tour $T$ and two edges $\{a,b\}$ and $\{x,y\}$ in $T$, there are two possibilities to replace these two edges by two other edges.
Either we can choose the pair $\{a,x\}$ and $\{b,y\}$ or we can choose the pair $\{a,y\}$ and $\{b,x\}$. Exactly one of these two pairs
will result in a tour again. Without knowing the other edges of $T$, we cannot decide which of the two possibilities we have to choose.
Therefore, we will assume in the following that the tour $T$ is an \emph{oriented} cycle, i.e.\ the edges of $T$ have an orientation 
such that each vertex has exactly one incoming and one outgoing edge. Using this convention, there is only one possibility to
exchange a pair of edges such that the new edge set is a tour again: two directed edges $(a,b)$ and $(x,y)$ have to be replaced by
the edges $(a,x)$ and $(b,y)$. Note that to obtain an oriented cycle again, one has to reverse the direction of the segment between $b$ and $x$, 
see Figure~\ref{fig:2-Opt}.

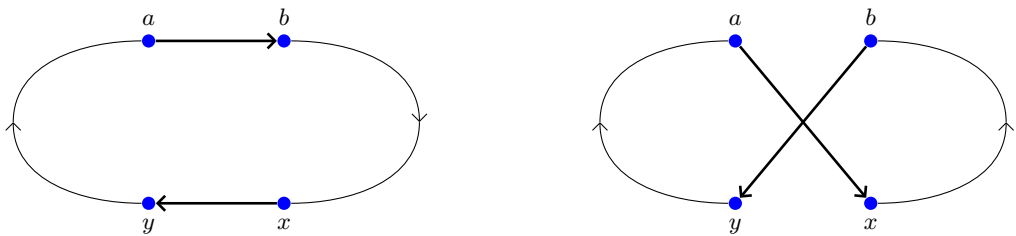
\begin{figure}
\centering
\begin{tikzpicture}[scale=1.2]
\tikzstyle{vertex}=[blue,circle,fill, minimum size=5, inner sep=0]
\tikzstyle{arrow}=[Straight Barb[length=1mm]]
\node[vertex, label=above:$a$] (a)  at (2  , 2.8) {};
\node[vertex, label=below:$y$] (y)  at (2  , 1  ) {};
\node[vertex, label=above:$b$] (b)  at (3.5, 2.8) {};
\node[vertex, label=below:$x$] (x)  at (3.5, 1  ) {};

\draw[-{Straight Barb[length=1mm]}, line width = 0.4, out =   0, in =  90] (b) to  (5  , 1.9);
\draw[                              line width = 0.4, out =   0, in = -90] (x) to  (5  , 1.9);
\draw[                              line width = 0.4, out = 180, in =  90] (a) to  (0.5, 1.9);
\draw[-{Straight Barb[length=1mm]}, line width = 0.4, out = 180, in = -90] (y) to  (0.5, 1.9);

\draw[-{Straight Barb[length=1mm]},  line width=1] (a) to (b);
\draw[-{Straight Barb[length=1mm]},  line width=1] (x) to (y);

\begin{scope}[shift={(6.5,0)}]
\node[vertex, label=above:$a$] (a)  at (2  , 2.8) {};
\node[vertex, label=below:$y$] (y)  at (2  , 1  ) {};
\node[vertex, label=above:$b$] (b)  at (3.5, 2.8) {};
\node[vertex, label=below:$x$] (x)  at (3.5, 1  ) {};

\draw[                              line width = 0.4, out =   0, in =  90] (b) to  (5  , 1.9);
\draw[-{Straight Barb[length=1mm]}, line width = 0.4, out =   0, in = -90] (x) to  (5  , 1.9);
\draw[                              line width = 0.4, out = 180, in =  90] (a) to  (0.5, 1.9);
\draw[-{Straight Barb[length=1mm]}, line width = 0.4, out = 180, in = -90] (y) to  (0.5, 1.9);

\draw[-{Straight Barb[length=1mm]},  line width=1] (a) to (x);
\draw[-{Straight Barb[length=1mm]},  line width=1] (b) to (y);

\end{scope}

\end{tikzpicture}
\caption{A TSP tour (left) and the tour obtained after replacing the edges  $(a,b)$ and $(x,y)$ with the edges $(a,x)$ and $(b,y)$ (right).
The orientation of the tour segment between the vertices $b$ and $x$ has been reversed in the new tour.}
\label{fig:2-Opt}
\end{figure}

If $(a,b)$ and $(x,y)$ are two edges in a tour $T$ and we have 
$$ c(a,x) + c(b,y) ~<~ c(a,b) + c(x,y) $$
then we say that replacing the edges $(a,b)$ and $(x,y)$ in $T$ by  the edges $(a,x)$ and $(b,y)$
is an \emph{improving 2-change}. Thus, the \2opt can be formulated as follows: \medskip

\framebox{\parbox{10cm}{
\noindent
{\bfseries \2opt} $(G=(V(G),E(G)), c:E(G) \to \mathbb{R}_{\ge 0})$\\[2mm]
1~ start with an arbitrary tour $T$\\
2~ \texttt{while} $\exists$ improving 2-change in $T$ \\
3~ ~~~~  perform an improving 2-change\\
4~ \texttt{output} $T$}}\medskip

\section{The Upper Bound on the Approximation Ratio}
\label{sec:upperBound}

Chandra, Karloff, and Tovey~\cite{CKT1999} proved in 1999 that the \2opt has an approximation ratio of $4\sqrt n$ for metric TSP. 
In 2013, Levin and Yovel~\cite{LY2013} observed that their proof yields the upper bound $2 \sqrt{2n}$. 
Here we present a new proof which improves this bound by a factor of~4:

\begin{theorem}
The approximation ratio of the \2opt on metric TSP is at most
$\sqrt{\frac{n}{2}}$.
\end{theorem}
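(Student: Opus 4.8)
The plan is to bound the length of a 2-optimal tour $T$ against the length of an optimal tour $T^*$ by exploiting the defining inequalities of 2-optimality. The key idea, going back to Chandra--Karloff--Tovey, is that for a 2-optimal tour, no improving 2-change exists, so for every pair of edges $(a,b)$ and $(x,y)$ of $T$ we have $c(a,b)+c(x,y)\le c(a,x)+c(b,y)$. Combining this with the triangle inequality lets us control the \emph{long} edges of $T$: if $e=(a,b)$ is a long edge of the 2-optimal tour, then the vertices lying ``near'' $a$ and ``near'' $b$ along the optimal tour cannot be too numerous, because otherwise two of the long edges of $T$ together with short optimal-tour connections would give an improving 2-change. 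This is essentially a packing argument.

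**First I would** set $L:=c(T^*)$ and order the edges of $T$ by length; let $e_1,\dots,e_n$ be the edges of $T$ with $c(e_1)\ge c(e_2)\ge\cdots$. The goal is to show $\sum_i c(e_i)\le\sqrt{n/2}\,L$. The standard approach is to bound $c(e_i)$ individually: one shows that the $i$-th longest edge of a 2-optimal tour satisfies something like $c(e_i)\le \frac{c(T^*)}{\lfloor i/2\rfloor}$ or a comparable bound, using that if $c(e_i)$ were large then $e_1,\dots,e_i$ are all long, and a suitable pair among them admits an improving 2-change (contradiction) unless the optimal tour ``spends'' enough length separating their endpoints. Summing the bound $c(e_i)\le 2L/i$ over $i$ gives $c(T)\le 2L H_n\sim 2L\ln n$, which is far too weak — so the improvement to $\sqrt{n/2}$ must come from a smarter, more global counting that plays the contribution of \emph{many} long edges against the single budget $L$, rather than bounding each edge separately. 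Concretely, I expect the argument to fix a threshold $t$, let $k$ be the number of edges of $T$ longer than $t$, and show $k\,t = O(L\sqrt{k})$ or equivalently that $t\le c\,L/\sqrt{k}$; integrating/summing this relation over thresholds then yields $c(T)\le\sqrt{n/2}\,L$. The precise constant $\sqrt{1/2}$ is what makes the bound tight, so the threshold-counting must be done with care to lose nothing.

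**The main obstacle** will be the packing/counting step: showing that if $T$ has $k$ edges of length greater than $t$, then $k$ is bounded in terms of $t$ and $L$ with the right constant. The mechanism is that the $\lceil k/2\rceil$ or so endpoints of these long edges, together with the structure of the optimal tour, must be ``spread out'': any two long edges $(a,b),(x,y)$ of $T$ whose endpoints are close in the metric would yield $c(a,x)+c(b,y)<c(a,b)+c(x,y)$, contradicting 2-optimality; so the relevant endpoints form a set of $\Omega(k)$ points that are pairwise at distance $\Omega(t)$, and such a point set, being visited by a tour of length $L$, forces $L=\Omega(k\cdot t)$ in the easy regime — but to get the quadratic improvement one instead argues that $L$ must be $\Omega(\sqrt{k}\cdot(\text{something}))$ via a two-dimensional-style packing even though no geometry is assumed, using the triangle inequality as a surrogate for planarity. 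Getting the constants exactly right here, and correctly handling the orientation/segment-reversal bookkeeping of 2-changes (so that the pairs of edges one picks actually \emph{can} be exchanged to form a tour), is where the real work lies; the rest is summation.
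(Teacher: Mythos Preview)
Your proposal correctly identifies the shape of the difficulty but stops exactly where the real idea is needed. You say the ``main obstacle'' is the packing/counting step and that ``getting the constants exactly right\ldots is where the real work lies''; that is precisely the content of the theorem, and you have not supplied it. Two concrete problems:

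First, the threshold scheme you outline cannot recover the constant $\sqrt{1/2}$. Suppose you succeed in proving the strongest pointwise statement of the type you describe: if $k$ edges of $T$ have length at least $t$, then $k\,t^2\le L^2/2$ (equivalently $c(e_i)\le L/\sqrt{2i}$ for the $i$-th longest edge). Summing gives $\sum_i c(e_i)\le (L/\sqrt{2})\sum_{i=1}^n i^{-1/2}\sim L\sqrt{2n}$, a factor of $2$ too large. The tight bound requires controlling $\sum_i c(e_i)^2$ globally, not $c(e_i)$ one at a time; once you know $\sum_e c(e)^2\le L^2/2$, the QM--AM inequality gives $\sum_e c(e)\le L\sqrt{n/2}$ directly, with no loss.

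Second, you gesture at a ``two-dimensional-style packing even though no geometry is assumed'' but give no construction. The paper supplies exactly this missing piece: normalize $c(T^*)=1$, map each vertex $v$ to its arc-length position $i_p(v)\in[0,1)$ along the optimal tour (viewed as a circle), and to each oriented edge $(u,v)$ of the 2-optimal tour assign the ``diamond'' $S(u,v)=\{(x,y)\in[0,1)^2: d(x,i_p(u))+d(y,i_q(v))<c(u,v)\}$, where $d$ is the circle metric. The 2-optimality inequality together with the triangle inequality for $d$ forces these diamonds to be pairwise disjoint, and each has area $2\,c(u,v)^2$. Packing them into the unit torus gives $2\sum_e c(e)^2\le 1$, and QM--AM finishes. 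Your plan contains none of this: the embedding via the optimal tour, the torus, the diamonds, and the disjointness-from-2-optimality step are all absent, and without them the argument does not go through.
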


\begin{proof}
Let $G=(V(G),E(G))$ with $c:E(G)\to\mathbb{R}_{\ge 0}$ and $|V(G)| = n$ be a metric TSP instance and let $T$ be an optimal tour. 
We may assume that $T$ has length~1. We fix an orientation of the tour $T$ and choose two vertices $p,q\in V(G)$ arbitrarily.
For each vertex $v\in V(G)$, let $i_{p}(v)$ be the length taken $\bmod$~1 of the unique shortest directed $p$-$v$ path starting in $p$ 
and using only edges of $T$. By our assumption, we have $i_{p}:V(G)\to [0,1)$ and we define $i_{q}$ similarly.
For the following, it helps to think of $[0,1)$ as the circle with circumference 1 and of $i_{p}$ as an embedding of the optimal tour into this circle such that the arc distance of two consecutive vertices on the circle is the length of the edge between them.

Define the following metric $d$ on the interval $[0,1)$, interpreted as a circle:
$d(x,y)$ is the length of the shorter of the two arcs between $x$ and $y$ on the circle, i.e., $d(x,y):=\min\{\lvert x-y \rvert , 1-\lvert x-y\rvert\}$. For any points $x,y,z \in [0,1)$ we have $d(x,y)+d(y,z)\geq d(x,z)$ since combining the two shortest arcs between $x,y$ and $y,z$ and deleting the overlap results in an arc between $x,z$.

Let $T'$ be a 2-optimal tour. As usual, we assume that it is directed.  Now consider for each edge $(u,v)$ of $T'$ the set
\begin{align*}
S_{p,q}(u,v)=\{(x,y)\in [0,1) \times [0,1) \mid d(x,i_{p}(u))+d(y,i_{q}(v))<c(u,v)\},
\end{align*}
as shown in Figure~\ref{fig:upperbound}. We claim that all these sets are pairwise disjoint for distinct edges $(u_1, v_1), (u_2, v_2) \in E(T')$.
Suppose that $S_{p,q}(u_1,v_1)$ and $S_{p,q}(u_2,v_2)$ intersect in $(x,y)$. Then, by the triangle inequality for $c$ and $d$, we have
\begin{align*}
c(u_1,u_2)+c(v_1,v_2)&\leq d(i_{p}(u_1),i_{p}(u_2))+d(i_{q}(v_1),i_{q}(v_2))\\
&\leq d(i_{p}(u_1),x)+d(x,i_{p}(u_2))+d(i_{q}(v_1),y)+ d(y,i_{q}(v_2))\\
&< c(u_1,v_1)+c(u_2,v_2).
\end{align*}
This contradicts the 2-optimality of $T'$. Hence, all these sets $S_{p,q}(u,v)$ are disjoint.

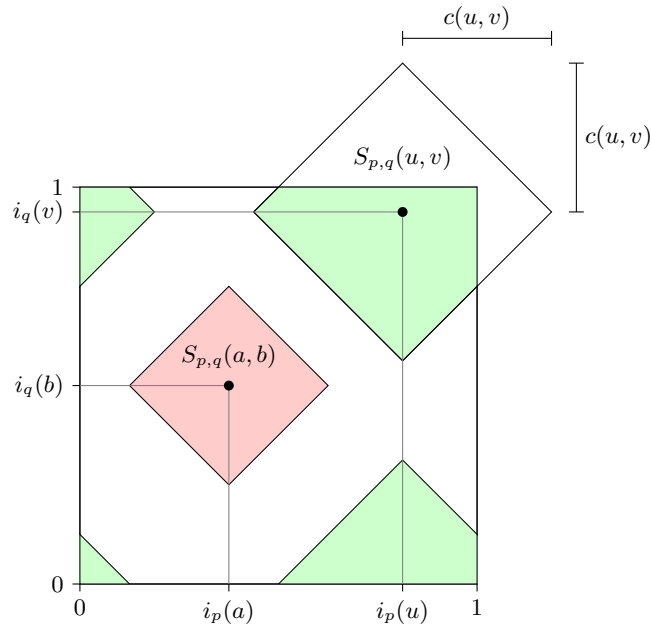
\begin{figure}
\centering
\begin{tikzpicture}[scale=0.33]
\def\firstrect{(13,9) -- (7, 15) -- (13, 21) -- (19, 15) -- (13,9)}
\def\secondrect{(6,4) -- (2, 8) -- (6, 12) -- (10, 8) -- (6,4)}
\begin{scope}
\clip[draw] (0,0) rectangle (16,16);
\begin{scope}[shift={(  0,  0)}] \filldraw[fill=green!20] \firstrect; \end{scope}
\begin{scope}[shift={(-16,  0)}] \filldraw[fill=green!20] \firstrect; \end{scope}
\begin{scope}[shift={(-16,-16)}] \filldraw[fill=green!20] \firstrect; \end{scope}
\begin{scope}[shift={(  0,-16)}] \filldraw[fill=green!20] \firstrect; \end{scope}
\end{scope}
\draw \firstrect;
\filldraw[fill=red!20] \secondrect; 
\draw (0,0) rectangle (16,16);
\draw (0,0) -- (0,-0.3) node[below] {$0$};
\draw (0,0) -- (-0.3,0) node[left]  {$0$};
\draw (16,0) -- (16,-0.3) node[below] {$1$};
\draw (0,16) -- (-0.3,16) node[left]  {$1$};

\draw (13,0) -- (13,-0.3) node[below] {$i_{p}(u)$};
\draw ( 6,0) -- ( 6,-0.3) node[below] {$i_{p}(a)$};
\draw (0,15) -- (-0.3,15) node[left] {$i_{q}(v)$};
\draw (0, 8) -- (-0.3, 8) node[left] {$i_{q}(b)$};

\draw[gray] (13,0) -- (13,15) -- (0,15);
\draw[gray] ( 6,0) -- ( 6, 8) -- (0, 8);

\fill (6,8) circle (0.2);
\fill (13,15) circle (0.2);
\draw ( 6, 10) node[below] {$S_{p,q}(a,b)$};
\draw (13, 18) node[below] {$S_{p,q}(u,v)$};

\draw (13, 22) -- (19, 22) node[midway, above] {$c(u,v)$};
\draw (13, 22.3) -- (13, 21.7);
\draw (19, 22.3) -- (19, 21.7);

\draw (20, 21) -- (20, 15) node[midway, right] {$c(u,v)$};
\draw (19.7, 21) -- (20.3, 21);
\draw (19.7, 15) -- (20.3, 15);

\end{tikzpicture}
\caption{The sets $S_{p,q}(a,b)$ (red) and $S_{p,q}(u,v)$ (green) assigned to the edges $(a,b)$ and $(u,v)$ of a 2-optimal tour. The sets are taken modulo the unit square and thus may consist of up to four parts.}
\label{fig:upperbound}
\end{figure}

Next, we want to show that the area of each set is independent of the choice of $p$ and $q$.
Let $p'$ and $q'$ be a different choice.
Note that for all vertices $u$, we have $i_{p'}(u)=i_{p'}(p) + i_{p}(u) \bmod 1$.
In particular, we find $d(x, i_p(u)) = d(x + i_{p'}(p) \bmod 1, i_{p'}(u))$ because both points are shifted by $i_{p'}(p)$ on the circle $[0,1)$.
By the definition of $S_{p,q}(u,v)$, this means that the map
\begin{align*}
t: [0,1) \times [0,1) &\to [0,1) \times [0,1) \\
(x,y) &\mapsto (x + i_{p'}(p) \bmod 1,y)
\end{align*}
bijectively sends $S_{p,q}(u,v)$ to $S_{p',q}(u,v)$.
In other words, we obtain $S_{p',q}(u,v)$ from $S_{p,q}(u,v)$ by cutting the unit square vertically at $1 - i_{p'}(p) = i_{p}(p')$ into two rectangles and reassembling them, as described by the following two translations:
\begin{align*}
t_1: [0,i_{p}(p')) \times [0,1) &\to [i_{p'}(p),1) \times [0,1) \\
(x,y) &\mapsto (x + i_{p'}(p), y) \\
t_2: [i_{p}(p'),1) \times [0,1) &\to [0,i_{p}(p')) \times [0,1) \\
(x,y) &\mapsto (x - i_{p}(p'), y)
\end{align*}
Since they have disjoint domains and disjoint images, their union $t = t_1\cup t_2$ is a bijection $[0,1) \times [0,1) \to [0,1) \times [0,1)$; sends $S_{p,q}(u,v)$ bijectively to $S_{p',q}(u,v)$; and preserves the area of this set because it consists of translations.
Analogously, we can cut the square horizontally at $i_{q}(q')$ to obtain $S_{p',q'}(u,v)$ from $S_{p',q}(u,v)$, again preserving its area.
We conclude that the area of $S_{p,q}(u,v)$ is independent of the choice of $p$ and $q$.

Now we want to show that the area of $S_{p,q}(u,v)$ is $2c(u,v)^2$ for any edge $(u,v)\in E(T')$.
By the previous paragraph, we can choose $p = u$ and $q = v$.
Then $S_{u,v}(u,v) = \{(x,y) \in [0,1) \times [0,1) \mid d(x,0) + d(y,0) < c(u,v) \}$.
This set consists of four congruent isosceles right-angled triangles whose legs have length $c(u,v)$.
Note that they do not overlap because the metric property ensures $c(u,v) \leq \frac{1}{2}$.
Hence we have: $\area(S_{p,q}(u,v)) = 4 \cdot \frac{c(u,v)^2}{2} = 2c(u,v)^2$.

Since the sets $S_{p,q}(u,v)$ for $(u,v) \in E(T')$ are pairwise disjoint, their combined area cannot exceed that of the unit square:
\[ 2\sum_{e \in E(T')} c(e)^2 = \sum_{(u,v) \in E(T')} \area(S_{p,q}(u,v)) \leq \area([0,1) \times [0,1)) = 1. \]
Then the inequality of arithmetic and quadratic means implies
$$ \frac{\sum_{e\in E(T')} c(e)}{n} \leq \sqrt{\frac{\sum_{e\in E(T')} c(e)^2}{n}} \leq \frac{1}{\sqrt{2n}}. $$
Hence, the length of the 2-optimal tour $T'$ satisfies $\sum_{e\in E(T')} c(e) \leq \sqrt{\frac{n}{2}}$.
\hfill\qed
\end{proof}

\section{The Lower Bound on the Approximation Ratio}
\label{sec:lowerBound}

To prove a lower bound $\alpha$ on the approximation ratio of the \2opt for the metric TSP, one has to 
show that for infinitely many $n$, there exists a metric TSP instance with $n$ cities
that contains a 2-optimal tour which is $\alpha$ times longer than a shortest tour.

In 1999, Chandra, Karloff, and Tovey~\cite{CKT1999} provided such a construction for all $n$ of the form $4\cdot k^2$ for positive integers $k$, which shows a lower bound of  $\frac{1}{4} \sqrt n$. Several years earlier, Plesn\'\i k~\cite{Ple1987a} had given
another construction without explicitly stating a lower bound. It turns out that his construction yields a lower bound
of $\frac{1}{\sqrt 8} \sqrt n$ and works for all $n$ of the form $8\cdot k^2 - 8\cdot k + 3$ for positive integers $k$.

The following result improves Plesn\'\i k's lower bound by a factor of 2, and yields the tight result stated in Theorem~\ref{thm:main}.

\begin{theorem}
The approximation ratio of the \2opt on the metric TSP is at least
$\sqrt{\frac{n}{2}}$.
\end{theorem}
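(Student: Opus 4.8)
The plan is to construct, for infinitely many values of $n$, an explicit metric TSP instance together with a specific tour $T'$ that is 2-optimal and has length $(1+o(1))\sqrt{n/2}$ times the length of a shortest tour. For this it suffices to exhibit one reasonably short tour — an easy ``in-order'' tour whose length $\ell$ upper-bounds the optimum — and to show that $T'$ is 2-optimal with $c(T')\ge(1-o(1))\sqrt{n/2}\cdot\ell$; the exact optimum is never needed. The proof of Section~\ref{sec:upperBound} serves as a blueprint for what an extremal instance must look like: the arithmetic--quadratic-mean step is tight only when all edges of $T'$ have (almost) equal length, and the area estimate is tight only when the diamonds $S_{p,q}(e)$ (almost) tile the unit square. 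Together these force every edge of $T'$ to have length $\approx 1/\sqrt{2n}$ once the optimal tour is scaled to length $1$, and they force the optimal tour to run essentially straight between the two endpoints of every edge of $T'$.

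Following this blueprint, I would place the $n$ points in the Euclidean plane — so that the triangle inequality holds automatically and needs no separate verification — in a highly regular pattern: morally, the optimal tour is a long thin path along which the points occur in many tight groups, the short tour visits the points in this order, and $T'$ is a ``zigzag'' tour all of whose edges are near-translates of a single short vector joining nearby groups. The subtlety is that the obvious clumped or grid-based realizations of this picture are \emph{not} 2-optimal — two nearly parallel $T'$-edges that lie close together can be short-circuited by a 2-change — so the points must be spread out in a carefully chosen way; this is precisely what separates the tight construction from the weaker ones of Plesn\'\i k and of Chandra, Karloff, and Tovey. Once the configuration and $T'$ are fixed, the length bookkeeping is routine: $c(T')$ is a sum of $n$ nearly equal Euclidean edge lengths, and dividing by the length of the in-order tour yields the ratio $(1+o(1))\sqrt{n/2}$, matching the upper bound and hence proving the tightness asserted in Theorem~\ref{thm:main}.

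The main obstacle is proving that $T'$ is genuinely 2-optimal, i.e.\ that $c(a,x)+c(b,y)\ge c(a,b)+c(x,y)$ for \emph{every} ordered pair of edges $(a,b),(x,y)$ of $T'$. This cannot be done pair by pair; instead I would use the large symmetry group of the construction — translations along the optimal tour together with reflections — to reduce to a bounded list of combinatorial ``types'' of edge pair, a type being determined by the relative positions of the four endpoints $a,b,x,y$ (in particular by which groups along the optimal tour they lie in, and by orientation, since a 2-change reverses a tour segment and thereby flips the roles of some edges). For each type one then compares the two sums of Euclidean lengths directly; because of the regular layout this should reduce to elementary one- or two-dimensional estimates — typically a convexity inequality for the length of a $T'$-type edge as a function of its ``span''. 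Edge pairs that involve the two ends of the path break the translational symmetry and require a separate, somewhat more careful argument; this is also where the $1+o(1)$ correction in the ratio comes from. Assembling all the cases shows that $T'$ is 2-optimal, and together with the length computation this establishes the lower bound $\sqrt{n/2}$.
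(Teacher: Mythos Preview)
Your write-up is a plan, not a proof: you never actually specify the point set or the tour $T'$, and you yourself note that ``the obvious clumped or grid-based realizations of this picture are \emph{not} 2-optimal'', so the whole difficulty lies precisely in the construction you have not given. Reducing the 2-optimality check to ``a bounded list of combinatorial types'' and ``elementary one- or two-dimensional estimates'' is not an argument until the configuration is fixed and the estimates are carried out; experience with Euclidean 2-Opt lower bounds (Plesn\'\i k; Chandra--Karloff--Tovey) shows that this step is where such constructions typically break, and your proposal gives no evidence that it can be pushed to the tight constant. In addition, your approach only targets a ratio of $(1-o(1))\sqrt{n/2}$, whereas the theorem asserts the exact value $\sqrt{n/2}$.

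The paper sidesteps all of this by abandoning the Euclidean plane. It works with an abstract metric on $n=2k^2$ vertices taking only the values $\{0,1,2\}$: the vertices are partitioned into $2k$ ``sections'' of size $k$ each (sections $V_1,\dots,V_k$ in one half, $W_1,\dots,W_k$ in the other); distances are $0$ inside a section, $1$ between halves, and $2$ between different sections of the same half. The triangle inequality is immediate. An optimal tour walks through the sections in order and has length $2k$. The bad tour $T'$ alternates between the halves, using every one of its $2k^2$ edges to connect some $V_i$ to some $W_j$, with the crucial combinatorial property that \emph{each ordered pair} $(V_i,W_j)$ is used exactly once. This makes the 2-optimality proof a two-line counting argument: an improving 2-change on edges $(a,b),(x,y)$ of $T'$ would need $c(a,x)+c(b,y)<2$; since $a,x$ lie in the same half and $b,y$ in the other, both summands are in $\{0,2\}$, so both must be $0$, forcing two distinct $T'$-edges from the same $V_i$ to the same $W_j$ --- a contradiction. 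The ratio is then exactly $2k^2/2k=k=\sqrt{n/2}$, with no asymptotics and no geometric case analysis. The moral is that for the \emph{metric} lower bound one should exploit the freedom of non-Euclidean metrics; your Euclidean strategy is attacking a strictly harder problem.
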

\begin{proof}
Let $G$ be a complete graph on $n := 2\cdot k^2$ nodes with vertex set 
$V(G) := \{v_{i,j} \mid 1\le i,j \le k\} \cup \{w_{i,j} \mid 1\le i,j \le k\}$.
For each $i$ with $1\le i\le k$, we call $V_i := \{ v_{i,j} \mid 1\le j \le k \}$
and $W_i := \{ w_{i,j} \mid 1 \le j \le k \}$ a \emph{section} of $V(G)$
and the $v$-vertices and $w$-vertices the two \emph{halves} of $V(G)$.

We define a distance function $c:E(G)\to\mathbb{R}_{\ge 0}$ as follows:
\begin{align*}
c(v_{i,j}, w_{i',j'}) & =  ~~1  && \mbox{ for all } 1 \le i, i', j, j' \le k \\
c(v_{i,j}, v_{i',j'}) & = 
  \begin{cases}
  0 & i = i' \\
  2 & i \neq i'
  \end{cases} && \mbox{ for all } 1 \le j, j' \le k \\
c(w_{i,j}, w_{i',j'}) & = 
  \begin{cases}
  0 & i = i' \\
  2 & i \neq i'
  \end{cases} && \mbox{ for all } 1 \le j, j' \le k 
\end{align*}

It is not hard to see that the function $c$ satisfies the triangle inequality:
Let $u,v,w$ be any three vertices in $V(G)$.
We want to show that $c(u,w) \leq c(u,v) + c(v,w)$.
As $c$ takes only the values $0,1,2$, this is obvious if $c(u,v) \geq 1$ and $c(v,w) \geq 1$.
Otherwise, without loss of generality, we may assume $c(u,v) = 0$.
i.e., $u$ and $v$ are in the same section of $V(G)$.
But then the definition of $c$ implies $c(u,w) = c(v,w)$ and the triangle inequality is satisfied.
Therefore, the graph $G$ with cost function $c$ is a metric TSP instance.

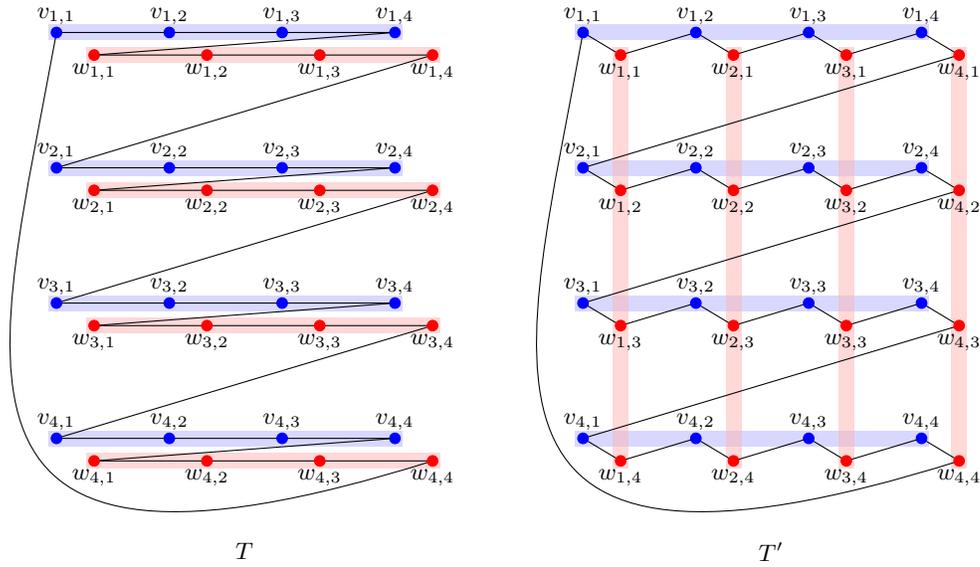
\begin{figure}
\centering

\hspace*{-12mm}\begin{tikzpicture}
\tikzstyle{vertex_v}=[blue,circle,fill,draw, minimum size=4, outer sep = -0.7mm, inner sep=0]
\tikzstyle{vertex_w}=[ red,circle,fill,draw, minimum size=4, outer sep = -0.7mm, inner sep=0]
\def\yscale{1.8}
\begin{scope}[shift={(0,0)}]
\foreach \y in {0,1,2,3}
    {\fill[blue!30,opacity=0.5] (0-0.1,\yscale*\y - 0.1) rectangle (3 * 1.5 + 0.1,\yscale*\y + 0.1);
     \fill[red!30 ,opacity=0.5] (0-0.1 + 0.5,\yscale*\y - 0.1 - 0.3) rectangle (3 * 1.5 + 0.1 + 0.5,\yscale*\y + 0.1 - 0.3);}

\foreach \x in {0,1,2}
    \foreach \y in {0,1,2,3} 
        {\draw[] (1.5*\x,  \yscale*\y) -- (1.5*\x + 1.5,  \yscale*\y);
         \draw[] (0.5 + 1.5*\x,  -0.3 + \yscale*\y) -- (0.5 + 1.5*\x + 1.5,  -0.3 + \yscale*\y);}
         
\foreach \y in {0,1,2,3} 
    \draw[] (1.5*3,  \yscale*\y) -- (0.5 + 1.5*0,  -0.3 + \yscale*\y);

\foreach \y in {1,2,3} 
    \draw[] (0.5 + 1.5*3,  -0.3 + \yscale*\y) -- (1.5*0,  \yscale*\y - \yscale);
    
\draw[] (0.5 + 1.5 * 3, -0.3 + \yscale * 0) .. controls (-2, -2.5)  and (-0.8, 1) .. (1.5 * 0, \yscale * 3);
         
\foreach \x in {1,2,3,4}
	\foreach \y in {1,2,3,4}
        {\node[vertex_v, label=above:$v_{\y,\x}$] (v\x,\y) at (1.5*\x - 1.5,        4 * \yscale - \yscale*\y) {};
         \node[vertex_w, label=below:$w_{\y,\x}$] (w\y,\x) at (1.5*\x - 1.0, -0.3 + 4 * \yscale - \yscale*\y) {};}
         
\draw (2.5,-1.5) node {$T$};         
\end{scope}

\begin{scope}[shift={(7,0)}]
\foreach \y in {0,1,2,3}
    \fill[blue!30,opacity=0.5] (0-0.1,\yscale*\y - 0.1) rectangle (3 * 1.5 + 0.1,\yscale*\y + 0.1);

\foreach \x in {0,1,2,3}
    \fill[red!30 ,opacity=0.5] (0-0.1 + 1.5*\x + 0.5, - 0.1 - 0.3) rectangle (\x * 1.5 + 0.1 + 0.5, 3*\yscale + 0.1 - 0.3);

\foreach \x in {0,1,2,3}
    \foreach \y in {0,1,2,3} 
        \draw[] (1.5*\x,  \yscale*\y) -- (0.5 + 1.5*\x,  -0.3 + \yscale*\y);

\foreach \x in {0,1,2}
    \foreach \y in {0,1,2,3} 
        \draw[] (1.5*\x + 1.5,  \yscale*\y) -- (0.5 + 1.5*\x,  -0.3 + \yscale*\y);

\foreach \y in {1,2,3} 
    \draw[] (0.5 + 1.5*3,  -0.3 + \yscale*\y) -- (1.5*0,  \yscale*\y - \yscale);
    
\draw[] (0.5 + 1.5 * 3, -0.3 + \yscale * 0) .. controls (-2, -2.5)  and (-0.8, 1) .. (1.5 * 0, \yscale * 3);
         
\foreach \x in {1,2,3,4}
	\foreach \y in {1,2,3,4}
        {\node[vertex_v, label=above:$v_{\y,\x}$] (v\x,\y) at (1.5*\x - 1.5,        4 * \yscale - \yscale * \y) {};
         \node[vertex_w, label=below:$w_{\x,\y}$] (w\x,\y) at (1.5*\x - 1.0, -0.3 + 4 * \yscale - \yscale * \y) {};}

\draw (2.5,-1.5) node {$T'$};         
\end{scope}        
\end{tikzpicture}\\[-1cm]
\caption{The optimal tour $T$ (left) and the 2-optimal tour $T'$ (right) for $k = 4$.
Note that the $w$-vertices on the right are mirrored at the diagonal compared
to the $w$-vertices on the left. Thus, on the left, vertices within the sections $V_i$ and $W_i$ are in a row. On the right, the vertices in 
the sections $V_i$ are in a row while the vertices in a section $W_i$ are within a column. 
The colored bars contain the vertices belonging to the same section.}
\label{metric-picture}
\end{figure}

In the following, we will construct two special tours in $G$,
which are depicted in Figure~\ref{metric-picture}.
Let $T$ be the tour consisting of the edges
\begin{align*}
E(T) =~ & \{(v_{i,j}, v_{i,j+1}) \mid 1 \le i \le k, 1 \le j < k \} ~\cup \\
        & \{(w_{i,j}, w_{i,j+1}) \mid 1 \le i \le k, 1 \le j < k \} ~\cup \\
        & \{(v_{i,k}, w_{i,1})   \mid 1 \le i \le k \} ~\cup \\
        & \{(w_{i,k}, v_{i+1,1}) \mid 1 \le i < k \} ~\cup \\
        & \{(w_{k,k}, v_{1,1})\}.
\end{align*}
The edges in the first two  sets have length 0; the $2k$ edges in the other three sets have length 1.
Therefore, we have $c(T) = 2k$.
This tour is optimal because any tour has to visit all $2k$ sections of $V(G)$
and the distance of two vertices from different sections is at least 1.

Next we consider the tour $T'$ with
\begin{align*}
E(T') =~ & \{(v_{i,j}, w_{j,i})   \mid 1\le i, j \le k \} ~\cup \\
         & \{(w_{j,i}, v_{i,j+1}) \mid 1\le i \le k, 1 \le j < k\} ~\cup \\
         & \{(w_{k,i}, v_{i+1,1}) \mid 1 \le i < k \} ~\cup \\
         & \{(w_{k,k}, v_{1,1})\}.
\end{align*}
Each edge of $T'$ has length 1.
Thus we have $c(T') = 2k^2$.
We claim that the tour $T'$ is 2-optimal.
Assume by contradiction that $T'$ is not 2-optimal.
Consider a pair of edges $(a,b),(x,y)$
that allows an improving 2-change to $(a,x),(b,y)$.
Hence $c(a,x) + c(b,y) < c(a,b) + c(x,y) = 2$
and one of $c(a,x)$ or $c(b,y)$ must be zero.
This means $a$ and $x$ or $b$ and $y$ must be in the same section.
But since $a$ and $b$ are in opposite halves of $V(G)$ (just like $x$ and $y$),
this means that $a$ and $x$ are in one half of $V(G)$ and $b$ and $y$ in the other.
Hence $c(a,x),c(b,y) \in \{0,2\}$.
For an improving 2-change, we must have $c(a,x) = c(b,y) = 0$. 
This implies that $a$ and $x$ lie in the same section of $V(G)$ and $b$ and $y$ lie in the same section of $V(G)$.
Thus there must exist indices $i$ and $j$ with $1 \le i,j \le k$ such that $a,x \in V_i$ and $b, y \in W_j$ or such that $a,x \in W_i$
and $b, y \in V_j$. This implies that there must exist two different edges from $V_i$ to $W_j$ or from $W_i$ to $V_j$.
However, this is a contradiction as by definition of $T'$, for any pair $i,j$ with $1\le i,j\le k$, there exists exactly one edge 
directed from $V_i$ to $W_j$ (namely the edge $(v_{i,j}, w_{j,i})$) and exactly one edge directed from $W_j$ to $V_i$. 
This proves the 2-optimality of $T'$.

Combining the above findings we get
\[ \frac{c(T')}{c(T)} = \frac{2k^2}{2k}
    = k = \sqrt{\frac{2k^2}{2}} = \sqrt{\frac{n}{2}}. \]
    \hfill\qed
\end{proof}

\bibliographystyle{plain}

\end{document}